\newcommand{\blue}[1]{{\color{blue}#1}}
\newcommand{\red}[1]{{{\color{red}#1}}}
\newcommand{\secEdbh}[1]{\mnote{ptcSecEdbh environement, for a second edition in the black hole book, but this means that the stuff around has been used in the black hole notes, so the whole file should be discarded or replaced by a reference?}}
\newcommand{\ptcSecEd}[1]{\mnote{ptcSecEd environement, but this means that this has been used in the Vienna lecture notes, so the whole file should be discarded or replaced by a reference?}}
\newcommand{\tooslow}[1]{\ptc{tooslow environement; figures commented out because too long to display and compile, trying to reduce them with acrobat did not give anything useful; should be fixed in the cleanrun file}}
\newcommand{\rnoblackholesc}[2]{}
\newcommand{\waehr}[1]{}
\newcommand{\licht}[1]{}
\newcommand{\docendnp}{\ptcr{remove end document}\newpage\input{end}\end{document}}
\newcommand{\docend}{\ptcr{remove end document}\newpage\input{endWithProblems}\end{document}}
\newcommand{\ptcn}[1]{\ptcr{#1}}
\newcommand{\syncx}[1]{\ptcxx{here starts a syncx, or possibly a restrict environment}}
\renewcommand{\syncx}[1]{\ptc{syncx command here}{\color{red}#1}}
\newcommand{\roscoff}[1]{\ptc{roscoff here}}
\newcommand{\ptcxx}[1]{\mnote{{\bf ptcss:} {\color{red} #1}}}
\newcommand{\mnotex}[1]
{\protect{\stepcounter{mnotecount}}$^{\mbox{\footnotesize
$
\bullet$\themnotecount}}$ \marginpar{
\raggedright\tiny\em
$\!\!\!\!\!\!\,\bullet$\themnotecount: #1} }
\newcommand{\jamesx}[1]{}
\renewcommand{\jamesx}[1]{{\mnote{{\color{blue}{\bf jg:}
#1} }}}
\newcommand{\tnabla}{\widetilde \nabla}
\newcommand{\tcM}{\bmcM}
\newcommand{\bmcM}{\red{\mbox{\dangerous} \,\,\,\,\widetilde{\!\!\!\!\mcM}}}
\newcommand{\doc}{\dangerous\langle\langle \scri \rangle\rangle}
\newcommand{\fourgKK}{\red{g}}
\newcommand{\dangerous}{{\fontencoding{U}\fontfamily{futs}\selectfont\char 66\relax}}
\def\ben{\begin{equation}}
\def\een{\end{equation}}
\def\bena{\begin{eqnarray}}
\def\eena{\end{eqnarray}}
\def\f(#1/#2){\frac{#1}{#2}}
\def\Frac(#1/#2){\left(\frac{#1}{#2}\right)}
\def\chris(#1-#2-#3){{\mit \Gamma}^{#1}{}_{{#2}{#3}} }
\def\tilchris(#1-#2-#3){\tilde{{\mit \Gamma}}^{#1}{}_{{#2}{#3}}}
\def\hatchris(#1-#2-#3){\hat{{\mit \Gamma}}^{#1}{}_{{#2}{#3}}}
\global\let\AddToReset=\@addtoreset}
\DeclareFontFamily{OT1}{rsfs}{}
\DeclareFontShape{OT1}{rsfs}{m}{n}{ <-7> rsfs5 <7-10> rsfs7 <10-> rsfs10}{}
\DeclareMathAlphabet{\mycal}{OT1}{rsfs}{m}{n}
\global\let\AddToReset=\@addtoreset}
\newcounter{mnotecount}[section]
\renewcommand{\themnotecount}{\thesection.\arabic{mnotecount}}
\newcommand{\mnote}[1]
{\protect{\stepcounter{mnotecount}}$^{\mbox{\footnotesize
$
\bullet$\themnotecount}}$ \marginpar{
\raggedright\tiny\em
$\!\!\!\!\!\!\,\bullet$\themnotecount: #1} }
\newcommand{\ptcr}[1]{{\color{red}\mnote{{\color{red}{\bf ptc:}
#1} }}}
\newcommand{\jj}[1]%
{{\color{red}\mnote{{\color{red}{\bf jj:} #1} }}}
\definecolor{HP}{rgb}{1,0.09,0.58}
\newcommand{\eqref}[1]{\eq{#1}}
\newcommand{\hs}{\cH_{\mbox{\scriptsize sing}}}
\newcommand{\bmcD}{\,\,\widetilde{\!\!\mcD}}%
\newcommand{\beadl}[1]{\begin{deqarr}\label{#1}}
\newcommand{\eeadl}[1]{\arrlabel{#1}\end{deqarr}}%
\def\nz{\ifmmode {I\hskip -3pt N} \else {\hbox {$I\hskip -3pt N$}}\fi}
\def\zz{\ifmmode {Z\hskip -4.8pt Z} \else
       {\hbox {$Z\hskip -4.8pt Z$}}\fi}
\def\qz{\ifmmode {Q\hskip -5.0pt\vrule height6.0pt depth 0pt
       \hskip 6pt} \else {\hbox
       {$Q\hskip -5.0pt\vrule height6.0pt depth 0pt\hskip 6pt$}}\fi}
\def\rz{\ifmmode {I\hskip -3pt R} \else {\hbox {$I\hskip -3pt R$}}\fi}
\def\cz{\ifmmode {C\hskip -4.8pt\vrule height5.8pt\hskip 6.3pt} \else
       {\hbox {$C\hskip -4.8pt\vrule height5.8pt\hskip 6.3pt$}}\fi}
\def\au{{\setbox0=\hbox{\lower1.36775ex\hbox{''}\kern-.05em}\dp0=.36775ex\hs
kip0pt\box0}}
\def\ao{{}\kern-.10em\hbox{``}}
\newcommand\Gregbeq{\begin{eqnarray}}
\newcommand\Gregeeq{\end{eqnarray}}
\newcommand{\scri}{{\mycal I}}%
\newcommand{\scrip}{\scri^{+}}%
\newcommand{\scrim}{\scri^{-}}%
\newcommand{\Scri}{\scri}
\def\cH{{\cal H}}
\def\h1{{\hat 1}}
\def\h2{{\hat 2}}
\def\3f{\frac{3}{2}}
\newcommand{\oversetty}[2]{%
\mathop{#2}\limits^{\vbox to -.1ex{%
\kern -1.5ex\hbox{$\scriptstyle #1$}\vss}}}
\newcommand{\jlcax}[1]{}
\newcommand{\eean}{\nonumber\end{eqnarray}}
\newcommand{\kk}[1]{}
\newcommand{\beq}{\begin{equation}}
\newcommand{\rgc}[1]{}
\newcommand{\FS}       
                  {F}
\newcommand{\HS} 
       {H_{\mbox{\scriptsize volume}}}
\newcommand{\eel}[1]{\label{#1}\end{equation}}
\newcommand{\eeal}[1]{\label{#1}\end{eqnarray}}
\newcommand{\bed}{\begin{deqarr}}
\newcommand{\eed}{\end{deqarr}}
\newcommand{\bedl}[1]{\begin{deqarr}\label{#1}}
\newcommand{\eedl}[2]{\arrlabel{#1}\label{#2}\end{deqarr}}
\newcommand{\tg}{{\widetilde{g}}}
\newcommand{\mcU}{{\mycal U}}
\newcommand{\mcN}{{\mycal N}}
\newcommand{\mcK}{{\mycal K}}
\newcommand{\bel}[1]{\begin{equation}\label{#1}}
\newcommand{\bea}{\begin{eqnarray}}
\newcommand{\bean}{\begin{eqnarray}\nonumber}
\newcommand{\beal}[1]{\begin{eqnarray}\label{#1}}
\newcommand{\eea}{\end{eqnarray}}
\newcommand{\Eqref}[1]{Equation~\eq{#1}}
\def\typeout{:<+ #.tex}\input{#}\typeout{:<-}1{\typeout{:<+ #1.tex}\input{#1}\typeout{:<-}}
\newcommand{\qed}{\hfill $\Box$}
\newcommand{\qedskip}{\hfill $\Box$\medskip}
\newcommand{\proof}{\noindent {\sc Proof:\ }}
\newcommand{\be}{\begin{equation}}
\newcommand{\eeq}{\end{equation}}
\newcommand{\ee}{\end{equation}}
\newcommand{\beqa}{\begin{eqnarray}}
\newcommand{\eeqa}{\end{eqnarray}}
\newcommand{\beqan}{\begin{eqnarray*}}
\newcommand{\eeqan}{\end{eqnarray*}}
\newcommand{\ba}{\begin{array}}
\newcommand{\ea}{\end{array}}
\newcommand{\hyp}{\mycal S}
\newcommand{\mcM}{{\mycal M}}
\newcommand{\mcD}{{\mycal D}}
\newcommand{\warn}[1]
{\protect{\stepcounter{mnotecount}}$^{\mbox{\footnotesize
$
\bullet$\themnotecount}}$ \marginpar{
\raggedright\tiny\em
$\!\!\!\!\!\!\,\bullet$\themnotecount: {\bf Warning:} #1} }
\newcommand{\R}{\mathbb{R}}
\newcommand{\eq}[1]{(\ref{#1})}
\newcommand{\ptc}[1]{\mnote{{\bf ptc:}#1}}
\newcommand{\beqar}{\begin{deqarr}}
\newcommand{\eeqar}{\end{deqarr}}
\newcommand{\beaa}{\begin{eqnarray*}}
\newcommand{\eeaa}{\end{eqnarray*}}
 \let\g=\gamma
\newcommand{\bethm}{\begin{theorem}}
\newcommand{\et}{\end{theorem}}
\newcommand{\bl}{\begin{Lemma}}
\newtheorem{Theorem} {\sc  Theorem\rm} [section]
\newtheorem{theorem} [Theorem] {\sc  Theorem\rm}
\newtheorem{Lemma} [Theorem] {\sc  Lemma\rm}
\newtheorem{Proposition} [Theorem] {\sc  Proposition\rm}
\newtheorem{Definition}[Theorem]{\sc  Definition\rm}
\theoremstyle{Remark}
\newtheorem{Remark}[Theorem]{\sc Remark\rm}
\newcommand{\fcoco}{\small}
\theoremstyle{nonumberplain}\theorembodyfont{\fcoco}
\theoremstyle{definition}
\DeclareFontFamily{OT1}{rsfs}{}
\DeclareFontShape{OT1}{rsfs}{m}{n}{ <-7> rsfs5 <7-10> rsfs7 <10-> rsfs10}{}
\DeclareMathAlphabet{\mycal}{OT1}{rsfs}{m}{n}
\global\let\AddToReset=\@addtoreset}
\renewcommand{\ptcn}[1]{} 
\renewcommand{\rnoblackholesc}[2]{}
\renewcommand{\bmcM}{{\,\,\,\,\widetilde{\!\!\!\!\mcM}}}
\renewcommand{\dangerous}{}
\renewcommand{\red}[1]{{\color{blue} #1}}
\renewcommand{\red}[1]{#1}
\renewcommand{\blue}[1]{#1}
\begin{document}
\title{%
Roads to
topological
censorship\protect\thanks{Preprint UWThPh-2019-17}
}

\author{Piotr T.\ Chru\'{s}ciel
and Gregory J. Galloway}

\maketitle
\begin{abstract}
 We review some aspects of topological censorship.  We present
 {several}  alternative sets of hypotheses which allow the proof of a
topological censorship theorem for spacetimes with conformal completions at infinity and vanishing cosmological constant.
\end{abstract}

\setcounter{tocdepth}{1}
\tableofcontents

\definecolor{HP}{rgb}{1,0.09,0.58}

\section{Introduction}

The beautiful topological censorship argument of Friedman, Schleich and Witt (FSW)~\cite{FriedmanSchleichWitt} has become a standard tool to obtain insight  {into}  the topology of spacetimes. It has been successfully applied to derive topology restrictions in several situations~\cite{Jacobson:venkatarami,Galloway:fitopology,CGS,ChWald,GallowayBrowdy,%
GSWW,CGL,galloway-topology}.

The setting for the  {original} FSW result is that of asymptotically flat spacetimes in the sense of Penrose.   It has been noted in the past (cf.\ e.g.~\cite{galloway:woolgar,GSWW}),
 that some of the arguments in~\cite{FriedmanSchleichWitt}  require somewhat stronger assumptions.
Indeed, the following issues need to be addressed in the proof:

\begin{enumerate}
  \item
  The question of ``$i_0$ avoidance'', namely that the future of a compact subset of the physical spacetime does not contain all of {future null infinity $\scri^+$}. This problem was already discussed in~\cite{galloway:woolgar}  (see in particular the second paragraph on p.~L2 there),  where it was proposed to extend  global hyperbolicity to  {$\scri^+$}). \red{Lemma~\ref{L29III11.1a} shows how this hypothesis  takes care of the problem.}
 \item The need to rule out “outward directed” null geodesics going from one  {$\scri^+$} to another in the covering space argument, which is addressed in~\cite{FriedmanSchleichWitt} but  further conditions seem to be needed to obtain the desired conclusion.
     \red{
     We implement this by adding the condition \eqref{30III19.7} to Theorem~\ref{T22III19.1}. Alternative possibilities are given in Proposition~\ref{P12IV19.1}.
     }
\end{enumerate}

Part of the  above may be traced back to   some claims in~\cite{HE} which require supplementary assumptions, cf.~\cite[Appendix~B]{ChDGH} for a discussion in a related context.

In a version of the theorem where one replaces pointwise energy conditions by   the Average Null Energy Condition (ANEC), one further needs to require ANEC to hold on half-geodesics starting near  {past null infinity $\scri^-$}, and the argument requires suitably perturbing a cross section of  {$\scri^-$} into the physical spacetime.   This was addressed in~\cite{GSWW}, albeit in the context of a timelike Scri.  But the issue is  {essentially} the same in the current case.

We note that  the topological censorship theorems proved in~\cite{CGS} in a Kaluza-Klein setting apply directly to the usual asymptotically flat spacetimes. In particular Theorem~5.5 there applies by taking the group $G$ to be trivial, cf.\ Theorem~\ref{notrapped0} below. However, the relevant part of the analysis in~\cite{CGS} assumes a uniformity property of Scri which might be viewed as too restrictive.

The aim of this note is to propose
some alternative  precise  sets of assumptions, with a causality-theory-flavor, which allow  one to obtain, still by the general approach of~\cite{FriedmanSchleichWitt},
a   topological censorship theorem in the asymptotically flat setting.

The reader is referred to \cite{EGP,BakerGalloway} for related results in a Cauchy data context.

\section{Reminders}
 \label{s9VIII12.1}

The original Friedman-Schleich-Witt censorship theorem is concerned with spacetimes which are asymptotically flat at null infinity. The precise framework is that of conformal completions, as introduced by
Penrose~\cite{penrose:asymptotic}:

\begin{Definition}
\label{DefConfComp+}
A pair $(\tcM ,\tg)$ will be called a
\emph{conformal completion at infinity} of $(\mcM,g)$ if $\tcM $
is a manifold with boundary such that:
\begin{enumerate}\item $\mcM$ is the interior of $\tcM $,
 \item  on $\bmcM$ there exists a function $\Omega$  with the
property that the metric $\tg$, defined as $\Omega^2g$ on
$\mcM$, extends by continuity to the boundary of $\tcM$, with the
extended metric maintaining its signature on the boundary,
\item
$\Omega$ is (strictly) positive on $\mcM$, differentiable on $\tcM $,
vanishes precisely on  $\Scri:=\partial \bmcM$, with $d\Omega$ \emph{nowhere vanishing} on
$\Scri$.
\end{enumerate}
\end{Definition}

In the case where $\scri$ is a null hypersurface, Geroch and Horowitz~\cite{GerochHorowitz} propose to add  the conditions that 1) one can find a gauge such that the Hessian of $\Omega$ vanishes on $\scri$, and that 2) the integral curves of $\tnabla \Omega$ on $\scri$ are complete in this gauge.
While this is certainly useful for some purposes, these conditions do not seem to play any
obvious role in our treatment of the problem at hand.

As usual, we set
\begin{equation}\label{30III19.5}
  \scrip := \scri \cap J^+(\mcM)
   \,,
   \quad
  \scrim := \scri \cap J^-(\mcM)
   \,.
\end{equation}

The domain of outer communications  $\doc$ is defined as
$$
 \doc:=I^+(\scrim) \cap I^{-}(\scrip)
 \,.
$$


Unless explicitly specified otherwise we use the terminology and notation of \cite{CGS}.

For the reader's convenience  we start by recalling \cite[Theorem~5.5]{CGS} with the Kaluza-Klein group $G$ taken to be trivial, as rewritten in  the context of Penrose's definition of asymptotic flatness. For this some terminology will be needed.

We say that $\scri$ satisfies \emph{asymptotic estimates uniform to order one} if there exists
\begin{enumerate}
  \item
a ``radial function'' $r$ and a constant \red{$R_{0}$} such that the set $\{r\ge R_0\}$ forms a neighborhood of $\scri$, with the level sets of $r$ \emph{timelike} there, and
  \item
 a time function $t$ on $\mcM$ such that for all $\red{r}\ge R_0$   and for  all $\tau \in \R$  the sets
$$
 \hyp_{r,\tau}:=\{r=R, t=\tau\}
$$
are \emph{smooth past and future inner trapped compact
spacelike
submanifolds} of $\mcM$.
\end{enumerate}

 The terminology is related to the formula for the expansion scalar of null hypersurfaces,%
which guarantees
the above
 in spacetimes which are asymptotically flat in a coordinate sense (as in \cite{CGS}), \emph{provided that} the metric and its first derivatives  satisfy  bounds which  are uniform in time in the asymptotic region $
 \{r\ge R_0\}$.

 We have the following rewording of Theorem~5.5 in \cite{CGS}:

\begin{Theorem}
\label{TCdsc} Let $(\mcM,\fourgKK)$ be a spacetime satisfying
the null energy condition, i.e.\
\begin{equation}\label{22III19.1}
  R_{\mu\nu}X^\mu X^\nu \ge 0
  \
  \mbox{for all null vector fields $X$,}
\end{equation}
and admitting a conformal completion with null boundary $\scri$ satisfying asymptotic estimates uniform to order one.

If the domain of outer communications $\doc$ is globally hyperbolic  and  if there exists  $R_1\ge R_0$ such that the set
$\{ r\ge R_1\}$  is simply connected, then
 $\doc $ is simply connected.
\qed
\end{Theorem}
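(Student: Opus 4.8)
The plan is to run the Friedman--Schleich--Witt covering-space argument; since the theorem is the specialization of \cite[Theorem~5.5]{CGS} to a trivial group, formally one need only check its hypotheses and invoke it, but I sketch the mechanism. Suppose for contradiction that $\doc$ is not simply connected and let $p\colon\hat\doc\to\doc$ be its universal cover. The metric, the null energy condition \eqref{22III19.1} and global hyperbolicity all pass to $\hat\doc$ --- a cover of a globally hyperbolic spacetime is globally hyperbolic, a Cauchy surface being supplied by a connected component of the lift of one downstairs --- and the conformal completion lifts to a completion of $\hat\doc$ whose boundary is $p^{-1}(\scri)$.

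First I would exploit the topological hypothesis. Since $\{r\ge R_1\}$ is simply connected, its image in $\pi_1(\doc)$ is trivial, so $p^{-1}(\{r\ge R_1\})$ is a disjoint union of copies each mapped homeomorphically by $p$; correspondingly $\scri$ lifts to a disjoint family $\{\hat\scri_\alpha\}$, each carrying its own future and past parts $\hat\scri^+_\alpha,\hat\scri^-_\alpha$, and as $\pi_1(\doc)\ne 1$ there are at least two of them. Next I would produce a future-directed causal curve in $\hat\doc$ running from one copy $\hat\scri^-_0$ to a \emph{distinct} copy $\hat\scri^+_1$: starting from a timelike curve through $\doc$ joining $\scrim$ to $\scrip$ (which exists because $\doc=I^+(\scrim)\cap I^-(\scrip)$), one precomposes with a noncontractible loop and lifts, arranging that the two endpoints land on different sheets.

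With such a curve in hand one has $\hat\scri^+_1\cap J^+(\hat\scri^-_0)\ne\nothing$, and the contradiction comes from focusing. By global hyperbolicity a maximizing null generator $\eta$ of the achronal boundary $\partial I^+(\hat\scri^-_0)$ reaches $\hat\scri^+_1$, and such a boundary generator can carry no focal point in its interior. On the other hand, running to the future out of $\hat\scri^-_0$ the generator must cross one of the surfaces $\hyp_{r,\tau}$ in the sheet over $\{r\ge R_0\}$ in its inner direction, and the hypothesis that these are future inner trapped gives a strictly negative value of the null expansion there; the Raychaudhuri equation together with $\mathrm{Ric}(\eta,\eta)\ge0$ then drives the expansion to $-\infty$ within finite affine parameter, while completeness of $\eta$ (it runs all the way out to $\scri$) ensures it is long enough to reach the resulting focal point --- contradicting achronality.

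The main obstacle is the middle step: guaranteeing that the lifted causal curve genuinely terminates on a \emph{different} copy of future infinity, and that it does not instead run off toward spatial infinity so that the focusing estimate never gets a chance to apply. These are exactly the ``$i_0$-avoidance'' and ``Scri-to-Scri null geodesic'' issues flagged in the Introduction; here they are controlled by global hyperbolicity up to $\scri^+$ together with the uniform asymptotic trapping hypothesis, whereas in the more general Theorem~\ref{T22III19.1} the same issues must instead be handled through the extra condition \eqref{30III19.7} (or the alternatives of Proposition~\ref{P12IV19.1}).
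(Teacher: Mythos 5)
Your fallback position --- check hypotheses and invoke \cite[Theorem~5.5]{CGS} with trivial group $G$ --- is in fact all the paper does: Theorem~\ref{TCdsc} is stated explicitly as a rewording of that result and carries no proof of its own. But the mechanism you sketch would not survive as a standalone proof, because your middle step is impossible as described. You propose to \emph{produce} a future-directed causal curve from a copy $\scrim_1$ to a \emph{distinct} copy $\scrip_2$ in the universal cover by ``precomposing with a noncontractible loop and lifting''. Concatenating a timelike curve with an arbitrary loop destroys causality, so the lift of the concatenation is not a causal curve; and deck transformations cannot help, since translating a causal curve by a deck transformation $g$ moves \emph{both} of its endpoints, sending a curve from $\scrim_\alpha$ to $\scrip_\alpha$ to one from $\scrim_{g\alpha}$ to $\scrip_{g\alpha}$, never separating paired sheets. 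Indeed, the very content of the theorem is that no cross-sheet causal curve exists, so non-simple-connectivity alone cannot manufacture one; any argument that claims to is proving the wrong implication. The correct architecture --- visible in the paper's proof of Theorem~\ref{T22III19.1}, which reproduces the relevant part of the CGS/FSW argument --- runs the other way: one \emph{assumes} a causal curve from $\scrim_1$ to $\scrip_2$ exists, derives a contradiction via the trapped-cut/focusing mechanism, and so obtains the disjointness \eqref{3IV19.61}, i.e.\ $I^-(\scrip_\alpha)\cap I^+(\scrim_\beta)=\emptyset$ for $\alpha\neq\beta$. The conclusion then follows from a step entirely absent from your sketch: the open sets $I^-(\scrip_\alpha)\cap I^+(\scrim_\alpha)$ are pairwise disjoint by \eqref{3IV19.61} and cover the universal cover, which is connected, so there is exactly one copy of $\scri$ upstairs --- contradicting the fact that nontrivial $\pi_1$ forces at least two. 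It is this open-cover/connectedness argument, not a constructed cross-sheet curve, that converts the causal non-visibility statement into simple connectivity.

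A secondary inaccuracy: in your focusing step you apply the trapped condition to a generator of $\partial I^+(\scrim_1)$ that merely ``crosses'' one of the surfaces $\hyp_{r,\tau}$. Negative expansion is a property of the null congruence \emph{normal} to the surface; a generator of $\partial I^+(\scrim_1)$ transversal to $\hyp_{r,\tau}$ inherits nothing from it. The correct implementation (as in Theorems~\ref{notrapped0} and \ref{T12IV19.1}, i.e.\ \cite[Theorem~6.1]{CGS}) takes a cut $S$ near infinity, works with $\dot J^+(S_1)$ --- whose generators \emph{are} initially normal to $S_1$ --- and must still argue that the generator reaching $\scrip_2$ belongs to the inner, trapped normal family rather than the outer one; this is precisely where the uniformity hypothesis (or, in Theorem~\ref{T22III19.1}, condition \eqref{30III19.7}) enters, and it is not supplied by global hyperbolicity of $\bmcD$ alone, contrary to the closing claim of your sketch. (Also, a minor point: a single connected component of the lift of a Cauchy surface need not be Cauchy for the cover; one uses the full preimage.)
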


We continue with the
following non-visibility result, closely related to the problem  at hand:

\begin{Theorem}\label{notrapped0} Let $(\mcM,\fourgKK)$   admit a
future conformal completion
\red{as in \eqref{30III19.5},
$$
 \bmcM = \mcM \cup \scri^+
 \,,
$$
and assume that $\scri^+$ is a connected null hypersurface.
}
Suppose  that
%
%
$$
 \bmcD := \mcD \cup \scri^+ \,, \ \mbox{ where $\mcD := I^-(\scri^+
  )\cap \mcM$,}
$$
is globally hyperbolic.
If the  null energy condition  holds on $\mcD$, then there are no compact weakly future trapped spacelike
submanifolds of codimension two
within $\mcD$.
\end{Theorem}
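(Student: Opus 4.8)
The plan is to argue by contradiction, following the classical ``trapped surfaces are invisible from infinity'' scheme adapted to the present hypotheses. Suppose that $T\subset \mcD$ is a compact weakly future trapped spacelike submanifold of codimension two, so that both future-directed null expansions $\theta_\pm$ of $T$ satisfy $\theta_\pm \le 0$. Since $T\subset \mcD = I^-(\scrip)\cap\mcM$, every point of $T$ lies in $I^-(\scrip)$, and hence $I^+(T)\cap\scrip\neq\emptyset$; this is the only place where ``visibility from infinity'' is used. The goal is to show that weak trapping together with the null energy condition \eqref{22III19.1} forces $\scrip\subset I^+(T)$, and then to contradict this using global hyperbolicity of $\bmcD$.

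First I would analyse the achronal boundary $E:=\partial J^+(T)$, computed in $\bmcD$. Global hyperbolicity makes $J^+(T)$ closed, so that $E$ is a closed achronal topological hypersurface; away from $T$ it is ruled by future-directed null geodesics emanating orthogonally from $T$, each generator having a past endpoint on $T$ and no past endpoint in the interior. The key analytic input is the Raychaudhuri equation along such a generator: the expansion starts at $\theta_\pm\le 0$, and \eqref{22III19.1} gives $d\theta/d\lambda \le -\theta^2/\dim T$, so that a focal point is reached within a bounded affine parameter, past which the generator enters $I^+(T)$ and therefore leaves $E$. Because $T$ is compact the affine bound $\Lambda$ can be taken uniform over $T$ and over the two null normal directions, so every generator of $E$ has a future endpoint at affine length at most $\Lambda$. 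Since a null geodesic can reach $\scrip$ only after infinite physical affine length --- here the first-order vanishing of $\Omega$, i.e.\ $d\Omega\neq 0$ on $\scri$ from Definition~\ref{DefConfComp+}, enters --- no generator of $E$ meets $\scrip$. Consequently $E\cap\scrip=\emptyset$, and $E$ is the continuous image of a closed subset of the compact set $T\times\{+,-\}\times[0,\Lambda]$, hence compact.

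With $E\cap\scrip=\emptyset$ in hand, I would write $\scrip = \big(\scrip\cap I^+(T)\big)\sqcup\big(\scrip\setminus J^+(T)\big)$; both pieces are open in $\scrip$ (the first because $I^+(T)$ is open, the second because $J^+(T)$ is closed), and they exhaust $\scrip$ precisely because $E$ avoids $\scrip$. The first piece is non-empty by the visibility observation, so connectedness of $\scrip$ yields $\scrip\subset I^+(T)$. To finish, fix any $q\in\scrip$ and follow the null generator of $\scrip$ through $q$ to the past: it runs off to the past end of $\scrip$ towards spatial infinity and hence eventually leaves every compact subset of $\bmcD$. Yet each of its points lies in $J^-(q)$ and, by $\scrip\subset I^+(T)$, in $J^+(T)$, hence in the causal diamond $J^+(T)\cap J^-(q)$, which is compact because $\bmcD$ is globally hyperbolic and $T$ is compact. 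A sequence of such points escaping to infinity cannot remain in a compact set, and this contradiction completes the proof.

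I expect the delicate step to be the focal-point analysis for \emph{weakly} (as opposed to strictly) trapped $T$: when $\theta_\pm\equiv 0$ along some generator, the Raychaudhuri inequality only gives $\theta\le 0$ and need not produce a focal point if the shear and the relevant Ricci term also vanish, so a generator could in principle run out to $\scrip$. Handling this borderline case --- for instance by the strong maximum principle for the null mean curvature, or by perturbing $T$ to a strictly trapped surface and passing to a limit --- is the main obstacle; a secondary technical point is to justify that null geodesics approaching $\scri$ are future-complete in the physical metric, which must be extracted from the asymptotic behaviour of $\Omega$ near $\scri$.
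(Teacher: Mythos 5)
Your overall skeleton is the standard one, and your endgame is correct: steps 3--4 (the open dichotomy on the connected $\scrip$, followed by trapping a past-inextendible generator of $\scrip$ in the compact diamond $J^+(T)\cap J^-(q)$ to violate strong causality) is essentially the paper's own Lemma~\ref{L29III11.1a}, whose role is precisely to supply the ``$i^0$-avoidance'' that the original formulation of \cite[Theorem~6.1]{CGS} assumed separately. The genuine gap is the one you flag yourself, and it is not a minor technicality but the entire analytic content of the theorem in the \emph{weakly} trapped case: your step 2 asserts a \emph{uniform} affine bound $\Lambda$ after which every generator of $E=\dot J^+(T)$ leaves $E$, but when $\theta_\pm(0)=0$ along a generator the Raychaudhuri inequality yields only $\theta\le 0$ and produces no focal point at all, let alone at uniformly bounded parameter. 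Hence compactness of $E$ and $E\cap\scrip=\emptyset$ are unproven, and the dichotomy in step 3 collapses. Your proposed repairs do not close this: perturbing a weakly trapped surface into a strictly trapped one is in general impossible (weak trapping is the borderline case, and nearby surfaces may acquire $\theta>0$ somewhere), while the ``strong maximum principle'' remedy is exactly the nontrivial input that the paper imports rather than proves --- its proof of Theorem~\ref{notrapped0} consists of citing \cite[Theorem~6.1]{CGS}, whose borderline cases are excluded by a weak comparison principle for null hypersurfaces, together with Lemma~\ref{L29III11.1a}.

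It is also worth noting that the paper's route inverts your logical order in a way that avoids needing any uniform focal bound. Rather than showing that \emph{no} generator of $\dot J^+(T)$ reaches $\scrip$, one uses Lemma~\ref{L29III11.1a} plus connectedness of $\scrip$ to extract a \emph{single} achronal null generator of $\dot J^+(T)$ running from $T$ out to $\scrip$; such a geodesic is future-complete in the physical metric (infinite affine length, using $d\Omega\neq 0$ on $\scri$, as in your remark) and has no focal points by achronality, while the null energy condition propagates $\theta\le 0$ from $T$ along it --- contradicting the strict positivity of the divergence of cross-sections near $\scrip$, cf.\ the conformal transformation formula \eqref{25V19.1}; the comparison principle is needed only to handle the regularity of $\dot J^+(T)$ and the $\theta\equiv 0$ case along this one geodesic. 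Consistent with this, the paper's Remark following Theorem~\ref{T12IV19.1} points out that the elementary focal-point results \cite[Propositions~12.32 and 12.33]{beem:ehrlich:global} or \cite[Proposition~48, p.~296]{BONeill} suffice only when the trapping is strict --- which is exactly the regime in which your step 2 is valid --- so your proposal, as written, proves the strictly trapped version of the theorem but not the stated one.
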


Theorem \ref{notrapped0} is~\cite[Theorem~6.1]{CGS} with the (not-made-clear there) notion of ``regular $\scri$''  being captured by Definition~\ref{DefConfComp+}, together with the observation that the
``$i^0$-avoidance condition'' of \cite[Theorem~6.1]{CGS} already follows%
\footnote{%
Note that for spacelike $\scri$'s the property, needed for topological censorship, that the future of a trapped surface does \emph{not} include the whole of $\scrip$, cannot be inferred from causality conditions, see~\cite{CGL}.
}
from hypothesis 1.\ of~\cite[Theorem~6.1]{CGS}:

\begin{Lemma}
 \label{L29III11.1a}
Under the hypotheses of Theorem~\ref{notrapped0},   for any compact set $K \subset \mcD$, $J^+(K, \bmcD)$ does not
 contain all of~$\scri^+$.
  \end{Lemma}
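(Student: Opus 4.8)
The plan is to convert the global hyperbolicity of $\bmcD$ into a uniform lower bound for a time function along $J^+(K,\bmcD)$, and then to exhibit points of $\scrip$ lying strictly below that bound. First I would use global hyperbolicity of $\bmcD$ to fix a Cauchy time function $\tau\colon\bmcD\to\R$, so that $\tau$ is strictly increasing along every future-directed causal curve and its level sets $\Sigma_s:=\tau^{-1}(s)$ are Cauchy hypersurfaces for $\bmcD$. Since $K$ is compact, $m:=\min_K\tau$ is finite, and because $\tau$ is non-decreasing along future-directed causal curves we obtain
\[
 J^+(K,\bmcD)\subseteq\{\tau\ge m\}\,.
\]

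Next I would show that $\tau$ is not bounded below on $\scrip$. The generators of $\scrip$ are null geodesics lying in the boundary $\scrip\subset\partial\bmcM$; within $\bmcD$ they are causal curves with no past endpoint, i.e.\ past-inextendible. A standard property of a Cauchy time function is that it ranges over all of $\R$ along any inextendible causal curve, so in particular $\tau\to-\infty$ to the past along each generator of $\scrip$. Hence there exists $q\in\scrip$ with $\tau(q)<m$, and by the displayed inclusion such a $q$ cannot lie in $J^+(K,\bmcD)$. Therefore $J^+(K,\bmcD)$ does not contain all of $\scrip$, which is the assertion.

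The step that genuinely uses the hypotheses — and which I expect to be the main obstacle — is the claim that the null generators of $\scrip$ are past-inextendible in $\bmcD$, equivalently that $\tau$ is unbounded below on $\scrip$. This is precisely the ``$i^0$-avoidance'' phenomenon, and it is here that hypothesis~1.\ of \cite[Theorem~6.1]{CGS} enters. The radial function $r$, whose timelike level sets $\{r=R\}$, $R\ge R_0$, foliate a neighbourhood of $\scri$, organises $\scrip$ as the common limit of these timelike tubes; following a generator of $\scrip$ toward the past it escapes every compact subset of $\bmcD$, running off toward spatial infinity $i^0\notin\bmcD$, so it has no past endpoint in $\bmcD$. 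Combining this past-inextendibility with the unboundedness of a Cauchy time function along inextendible causal curves supplies the required $q\in\scrip$ with $\tau(q)<m$. The null character of $\scrip$ is essential: it is exactly what makes the generators causal curves that a Cauchy surface must detect; for a spacelike $\scri$ the analogous conclusion fails, as recorded in the footnote and in \cite{CGL}.
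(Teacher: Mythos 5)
Your overall strategy (bound a time function below on $J^+(K,\bmcD)$, then show it is unbounded below along $\scrip$) is workable, but two of its steps are not supported by the hypotheses actually in force, and these are genuine gaps. First, the hypotheses of Theorem~\ref{notrapped0} contain no radial function $r$ and no timelike tubes $\{r=R\}$: that structure belongs to the ``asymptotic estimates uniform to order one'' of Theorem~\ref{TCdsc}, which is exactly the uniformity assumption this paper is at pains to avoid, and there is no $i^0$ anywhere in the setting of the Lemma. So your justification of the crucial step --- past-inextendibility in $\bmcD$ of the generators of $\scrip$ --- rests on data you do not have. The correct justification is purely structural: $\scrip$ is by hypothesis a null hypersurface constituting the entire boundary of $\bmcM=\mcM\cup\scrip$, and it is closed in $\bmcD=\mcD\cup\scrip$; if a maximally extended generator had a past endpoint $q\in\bmcD$, then $q\in\scrip$, and the generator through $q$ would extend further to the past within the null hypersurface $\scrip$, contradicting maximality. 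Second, you ``fix a Cauchy time function'' on $\bmcD$. But $\bmcD$ is a manifold with (null) boundary, and the standard existence theorems (Geroch, Bernal--S\'anchez) are formulated for spacetimes without boundary; global hyperbolicity of $\bmcD$ here delivers strong causality and compactness of the sets $J^+(p,\bmcD)\cap J^-(q,\bmcD)$, and the further claim that a time function with Cauchy level sets exists in this boundary setting needs an argument or a reference, not a throwaway line.

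Both gaps are avoided at once by the paper's own, more economical argument, which you should compare with: take $p\in J^+(K,\bmcD)\cap\scrip$; if $\scrip\subset J^+(K,\bmcD)$, then the past-directed generator of $\scrip$ through $p$ is a past-inextendible causal curve contained in the compact set $J^-(p,\bmcD)\cap J^+(K,\bmcD)$, and such imprisonment of an inextendible causal curve is impossible in a globally hyperbolic, hence strongly causal, spacetime. This uses only the two defining ingredients of global hyperbolicity of $\bmcD$, with the inextendibility input being the structural fact about generators above rather than any radial foliation; your time-function scaffolding, once repaired, would amount to a longer route to the same contradiction.
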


  \begin{proof}
Let $p\in J^+(K,\bmcD)\cap \scrip$. By global hyperbolicity $J^-(p,,\bmcD) \cap J^+(K,\bmcD) $ is compact. Suppose that $J^+(K, \bmcD)$
 contains all of~$\scri^+$, then the generator of $\scrip$ through $p$ is contained in the compact set $J^-(p,,\bmcD) \cap J^+(K,\bmcD) $,
  which is impossible in a globally hyperbolic, hence strongly causal, spacetime.
 \qedskip
\end{proof}

For our  purposes below we will need a variation of
Theorem~\ref{notrapped0}. Consider a compact codimension-two spacelike submanifold of $\mcM$, say $S$, and let us suppose that $S$ is two-sided, in the sense of existence of a well-defined field of spacelike normals. Let us  choose to denote one of those sides ``inner''. Instead of supposing, as in Theorem~\ref{notrapped0}, that both families of null future directed geodesics are weakly trapped at $S$, we will only assume that the inner family is weakly trapped.
The proof of Theorem~\ref{notrapped0} applies as is to this setting to give:

\begin{Theorem}\label{T12IV19.1} Under the hypotheses of Theorem~\ref{notrapped0}, let $S$ be a future weakly inner trapped submanifold contained in $\mcD$. Then there are no  null geodesics on $\dot J^+(S)$ connecting $\scrip$ with the inner side of $S$.
 \qed
\end{Theorem}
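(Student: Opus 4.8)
The plan is to argue by contradiction, reproducing the focusing argument behind Theorem~\ref{notrapped0}, but now applied only to the \emph{inner} null congruence of $S$. Suppose there were a null geodesic $\eta\subset\dot J^+(S)$ whose past endpoint lies on $S$, which leaves $S$ through its inner side, and whose future endpoint is a point $p\in\scrip$. Since $\bmcD$ is globally hyperbolic, $\dot J^+(S)=\dot I^+(S)$ is an achronal topological hypersurface in $\bmcD$ whose null generators emanate orthogonally from $S$; a generator remains on $\dot J^+(S)$ only so long as it carries no focal point to $S$ and no cut point, because beyond such a point it enters $I^+(S)$. Hence $\eta$, which lies on $\dot J^+(S)$ from $S$ all the way up to $p$, must be a \emph{maximal} generator: orthogonal to $S$ at its past endpoint and free of focal points to $S$ in its interior.

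The next step is to observe that $\eta$ is future complete in the physical metric $g$. Indeed $p\in\scrip$ is reached at finite affine parameter for the rescaled metric $\tg=\Omega^2 g$, and since $\Omega\to 0$ on $\scrip$ the relation between the $\tg$-affine and $g$-affine parameters forces the physical affine length of $\eta$ to diverge as $p$ is approached. Consequently any focal point that occurs at finite $g$-affine parameter must precede $p$ along $\eta$.

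The heart of the matter is then the Raychaudhuri/focusing step. Because $\eta$ leaves $S$ through the inner side, and $S$ is future weakly inner trapped, the expansion $\theta$ of the inner null congruence satisfies $\theta\le 0$ at $S$. The null energy condition~\eqref{22III19.1} and Raychaudhuri's equation give $d\theta/d\lambda\le -\theta^2/(n-1)$, with $n-1=\dim S$, so that $\theta$ is nonincreasing and, once strictly negative, is driven to $-\infty$ within a finite affine length; that is, $\eta$ acquires a focal point to $S$. By the completeness just noted this focal point occurs strictly before $p$, which contradicts the maximality of $\eta$ as a generator of $\dot J^+(S)$ reaching $p$. This contradiction proves that no such $\eta$ can exist.

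The step I expect to be the main obstacle is the borderline case $\theta\equiv 0$ along $\eta$, in which the focusing inequality yields no focal point and one is left with a complete, shear-free, achronal generator with vanishing expansion; one must rule this out using that $S$ sits in the \emph{open} region $\mcD\subset I^-(\scrip)$ rather than on its boundary. This is, however, exactly the point already settled in the proof of Theorem~\ref{notrapped0}, so the present statement follows verbatim from that argument once it is applied to the inner congruence rather than to both null normal directions.
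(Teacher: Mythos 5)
Your proposal is correct and takes essentially the same route as the paper, whose entire proof of Theorem~\ref{T12IV19.1} consists of observing that the proof of Theorem~\ref{notrapped0} (that is, of Theorem~6.1 in \cite{CGS}) applies as is with the focusing argument restricted to the inner null congruence; your reconstruction of the standard ingredients (generators of $\dot J^+(S)$ orthogonal to $S$ and free of focal points, future completeness of $\eta$ from $d\Omega\neq 0$ at $\scrip$, Raychaudhuri plus the null energy condition) is exactly that argument. The only caveat is your heuristic for the borderline case $\theta\equiv 0$: it is excluded in the proof you invoke by a weak comparison principle for null hypersurfaces near $\scrip$ (as the paper's Remark after the theorem makes explicit, noting that for strictly inner trapped $S$ one can instead use the null maximization results of Beem--Ehrlich or O'Neill), not merely by $S$ lying in the open region $\mcD$ --- but since you correctly defer that step to the proof of Theorem~\ref{notrapped0}, the argument stands.
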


\begin{Remark}
As such, we will only need Theorem~\ref{T12IV19.1} for surfaces which are inner trapped rather than weakly inner-trapped.
 In such a case the proof of \cite[Theorem~6.1]{CGS} can be simplified by   invoking
\cite[Propositions 12.32 and 12.33]{beem:ehrlich:global}, or  \cite[Proposition 48, p.\ 296]{BONeill} (compare~\cite[Section~2]{Galloway:fitopology}), rather than using a weak comparison principle for null hypersurfaces to exclude the borderline cases.
\qed
\end{Remark}

\section{The theorem}

We have the following alternative to Theorem~\ref{TCdsc}:

\begin{Theorem}
 \label{T22III19.1}
Let $(\mcM,g)$ be a globally hyperbolic space-time satisfying the null energy condition.
Assume that $(\mcM,g)$ admits a  completion at infinity such that both $\scrip$ and $\scrim$ are connected null hypersurfaces, and that
\begin{equation}\label{15IV11.1}
 \bmcD := \mcD \cup \scri^+
 \,,
  \
  \mbox{ where $\mcD := I^-(\scri^+)\cap \mcM$}
\end{equation}
is globally hyperbolic. Suppose that there exists a simply connected neighborhood $\mcU$ of $\scri$
and
a  foliation of a (perhaps smaller) neighborhood of $\scrim$ with
spacelike   up-to-boundary and smooth  up-to-boundary
 acausal hypersurfaces $\hyp_\tau \subset \bmcM$, $\tau\in R$,
such that for each $\tau$  the intersections $\red{S_{\tau,\varepsilon}}$ of $\hyp_\tau$
with the $\varepsilon$-level sets of $\Omega$ are
\begin{equation}\label{30III19.7a}
  \mbox{non-empty, smooth, \red{connected} and compact for all $ \varepsilon\ge 0$ small enough.}
\end{equation}%
If, again
for each $\tau$ and
for \red{all} $\varepsilon$ small enough,
\begin{equation}\label{30III19.7}
 \mbox{the null geodesics normal to $\red{S_{\tau,\varepsilon}}$ with $\dot \Omega|_{\red{S_{\tau,\varepsilon}}} <0$  never leave $\mcU$,}
\end{equation}
%
where $\dot \Omega$ denotes the derivative of $\Omega$ along the geodesic,
then the domain of outer communications
$I^+(\scrim) \cap I^{-}(\scrip)$ is simply connected.
\end{Theorem}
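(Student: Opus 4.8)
The plan is to run the Friedman--Schleich--Witt covering-space argument, with Theorem~\ref{T12IV19.1} supplying the focusing mechanism and condition~\eqref{30III19.7} controlling the generators that stay near infinity. Assume, for contradiction, that $\doc=I^+(\scrim)\cap I^{-}(\scrip)$ is not simply connected, and pass to its universal cover, lifting the conformal completion over the product collar provided by the simply connected neighborhood $\mcU$ of $\scri$. Because $\mcU$ lifts trivially, its preimage is a disjoint union of diffeomorphic sheets $\mcU_{(k)}$, and, using that $\scrip$ and $\scrim$ are connected, each sheet contains exactly one connected copy $\scrip_{(k)}$ and one copy $\scrim_{(k)}$. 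Global hyperbolicity of $\bmcD$ in~\eqref{15IV11.1} lifts (a lift of a Cauchy surface is Cauchy), and the null energy condition, being local, also holds upstairs; since each $\scrip_{(k)}$ is again a complete null conformal boundary, the $i^0$-avoidance Lemma~\ref{L29III11.1a} and the focusing argument of Theorem~\ref{T12IV19.1} are available in the cover, componentwise in $k$.

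First I would show that, for fixed $\tau$ and all sufficiently small $\varepsilon$, the compact connected spacelike surface $S_{\tau,\varepsilon}$ of~\eqref{30III19.7a} is future \emph{inner} trapped, where ``inner'' is declared to be the $\dot\Omega>0$ side, i.e.\ the future null normal pointing into the bulk of $\doc$. This follows from the standard conformal computation of the null expansions at a null $\scri$: with $d\Omega$ null and non-vanishing on $\scri$, the physical expansion of the $\dot\Omega>0$ normal congruence is strictly negative near $\scri$, as one checks already in Minkowski space (ingoing future null rays from a large sphere converge). Since $\doc\subset\mcD=I^-(\scrip)\cap\mcM$, each lift $\hat S_{(0)}\subset\mcU_{(0)}$ is then a compact future inner trapped surface lying in $\widehat\mcD$, and the non-visibility conclusion of Theorem~\ref{T12IV19.1} (applied to each component $\scrip_{(k)}$) holds for it: there is \emph{no} null geodesic lying on $\dot J^+(\hat S_{(0)})$ that joins any $\scrip_{(k)}$ to the inner side of $\hat S_{(0)}$.

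Next I would use the failure of simple connectivity to manufacture exactly such a forbidden geodesic. Taking a timelike curve from $\scrim$ to $\scrip$ through $\doc$ and lifting it, non-triviality of $\pi_1(\doc)$ forces---after composing with a suitable deck transformation and using that the sheets are identified near $\scri$ through the simply connected $\mcU$---a future-directed causal curve in the cover running from $\scrim_{(0)}\subset\mcU_{(0)}$ to a point $\hat p^+\in\scrip_{(j)}$ with $j\neq 0$. For an appropriate leaf and small $\varepsilon$ this gives $\scrip_{(j)}\cap J^+(\hat S_{(0)})\neq\emptyset$, while the lifted Lemma~\ref{L29III11.1a} shows that the compact $\hat S_{(0)}$ cannot contain all of the connected $\scrip_{(j)}$ in its future; hence $\scrip_{(j)}\cap\dot J^+(\hat S_{(0)})\neq\emptyset$. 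Any such boundary point is the future endpoint of a null generator of $\dot J^+(\hat S_{(0)})$, that is, of one of the null geodesics normal to $\hat S_{(0)}$. By~\eqref{30III19.7} the generators issued from the $\dot\Omega<0$ (outer) side never leave $\mcU_{(0)}$, so they cannot reach $\scrip_{(j)}\subset\mcU_{(j)}$ with $j\neq 0$; the generator reaching $\scrip_{(j)}$ must therefore leave from the inner side. This produces a null geodesic on $\dot J^+(\hat S_{(0)})$ joining $\scrip_{(j)}$ to the inner side of the inner trapped surface $\hat S_{(0)}$, contradicting the previous paragraph. Hence $\doc$ is simply connected.

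The step I expect to be the main obstacle is the homotopy bookkeeping in the third paragraph: proving that non-simple-connectivity genuinely forces a lift landing on a \emph{different} sheet $\scrip_{(j)}$, $j\neq 0$, rather than returning to $\scrip_{(0)}$. This is precisely where one must convert the topological information (a non-contractible loop cannot be swallowed by the trivially lifting neighborhood $\mcU$) into a statement about which copy of $\scri$ a lifted causal curve meets, and where the connectedness of $\scrip$ and $\scrim$ together with the simple connectivity of $\mcU$ enter essentially; a related, more routine point is to extend the causal-theoretic arguments over the conformal boundary and to confirm that the surfaces $S_{\tau,\varepsilon}$ indeed lie in $\mcD$ with strictly (not merely weakly) negative inner expansion, so that the simplified form of Theorem~\ref{T12IV19.1} noted in the preceding Remark may be used.
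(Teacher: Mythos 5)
Your covering-space setup, the identification of the inner ($\dot\Omega>0$) family of null normals to $S_{\tau,\varepsilon}$ as the trapped one near $\scrim$, and the endgame in which \eqref{30III19.7} confines the $\dot\Omega<0$ generators of $\dot J^+$ to a single sheet $\mcU_{(0)}$, forcing any generator that reaches a different sheet's $\scrip_{(j)}$ to leave from the inner side and thereby contradict Theorem~\ref{T12IV19.1} --- all of this matches the paper's proof (which, for the trapping statement, you should back not by the Minkowski plausibility check but by the conformal transformation formula \eqref{25V19.1}, as the paper does). The genuine gap is exactly the step you flag in your third paragraph: you assume that non-triviality of $\pi_1(\doc)$ produces, ``after composing with a suitable deck transformation,'' a future-directed causal curve in the cover from $\scrim_{(0)}$ to $\scrip_{(j)}$ with $j\ne 0$. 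This cannot be made to work as stated. Deck transformations act equivariantly on the sheets: a causal curve from $\scrim_{(k)}$ to $\scrip_{(k)}$ is mapped to a causal curve from $\scrim_{(gk)}$ to $\scrip_{(gk)}$, so composing with deck transformations never converts a same-sheet causal connection into a cross-sheet one. Nor is there any a priori reason for a cross-sheet causal curve to exist; indeed the content of the theorem is that the universal cover has a single sheet, so no such curve can ever be exhibited directly, and any attempt to derive one from a non-contractible loop is doomed.

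The paper resolves this by inverting the logic. It runs your geodesic argument \emph{conditionally}: \emph{suppose} there is a causal curve $\tilde\gamma$ from $\scrim_1$ to $\scrip_2$ with $\mcU_1\ne\mcU_2$; then $\tilde\gamma$ selects the leaf $\hyp_{\tau_0}$ and the surface $S$, and the trapped-surface contradiction shows no such curve exists, i.e.\ $I^{-}(\scrip_\alpha)\cap I^{+}(\scrim_\beta)=\emptyset$ for all $\alpha\ne\beta$, which is \eqref{3IV19.61}. The conclusion then follows not from homotopy bookkeeping but from elementary connectedness: every point of the cover lies in $I^{+}(\scrim_\beta)\cap I^{-}(\scrip_\alpha)$ for some $\alpha,\beta$ (lift the causal curves witnessing membership in $\doc$ downstairs), and \eqref{3IV19.61} forces $\alpha=\beta$; hence the open sets $I^{+}(\scrim_\alpha)\cap I^{-}(\scrip_\alpha)$ are pairwise disjoint and cover the connected manifold $\bmcM_1$, so there is exactly one of them, i.e.\ one copy of $\scri$, i.e.\ the universal cover is trivial --- contradicting the assumed non-simple-connectivity. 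Replacing your third paragraph by this disjoint-open-cover argument closes the gap, after which your proposal coincides with the paper's proof.
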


\begin{Remark}
 \label{R16V19.1}

 The key property of the collection of smooth compact manifolds $\red{S_{\tau,\varepsilon}}$ is that they are future-inner trapped, cover a neighborhood of $\scrim$, and satisfy \eqref{30III19.7}. There are many ways of obtaining such a family.

First, one could, e.g., assume that there exists a foliation of $\scrim$ by smooth compact submanifolds,
 each intersecting the generators of $\scrim$ exactly once, and extend this foliation to a foliation of a  neighborhood of $\scrim$ in $\bmcM$ by moving along a field of spacelike directions, leading to the  family  $\hyp_\tau$    of the theorem.

  Next, the manifolds $\red{S_{\tau,\varepsilon}}$ can then, e.g., be obtained by moving a $\tg$-distance $\varepsilon$ away from $\scrim$ along $\hyp_\tau$, rather than intersecting with the level sets of $\Omega$.

Finally, one could define an alternative $\tau$-foliation, as needed in the statement of the theorem,
 by extending the just-mentioned foliation of $\scrim$ into $\mcM$ along null rather than spacelike
  hypersurfaces $\hyp_\tau$, with the manifolds $\red{S_{\tau,\varepsilon}}$ obtained by moving cuts of $\scrim$ an affine-parameter-distance $\varepsilon$  along the generators of $\hyp_\tau$.

   In each case the resulting $\red{S_{\tau,\varepsilon}}$'s will be future-inner trapped for $\varepsilon$ small enough.
 But note that in each case the additional condition \eqref{30III19.7} needs to be imposed.
\qed
\end{Remark}

\begin{figure}
\begin{center}
\mbox{
\includegraphics[width=1.3in]{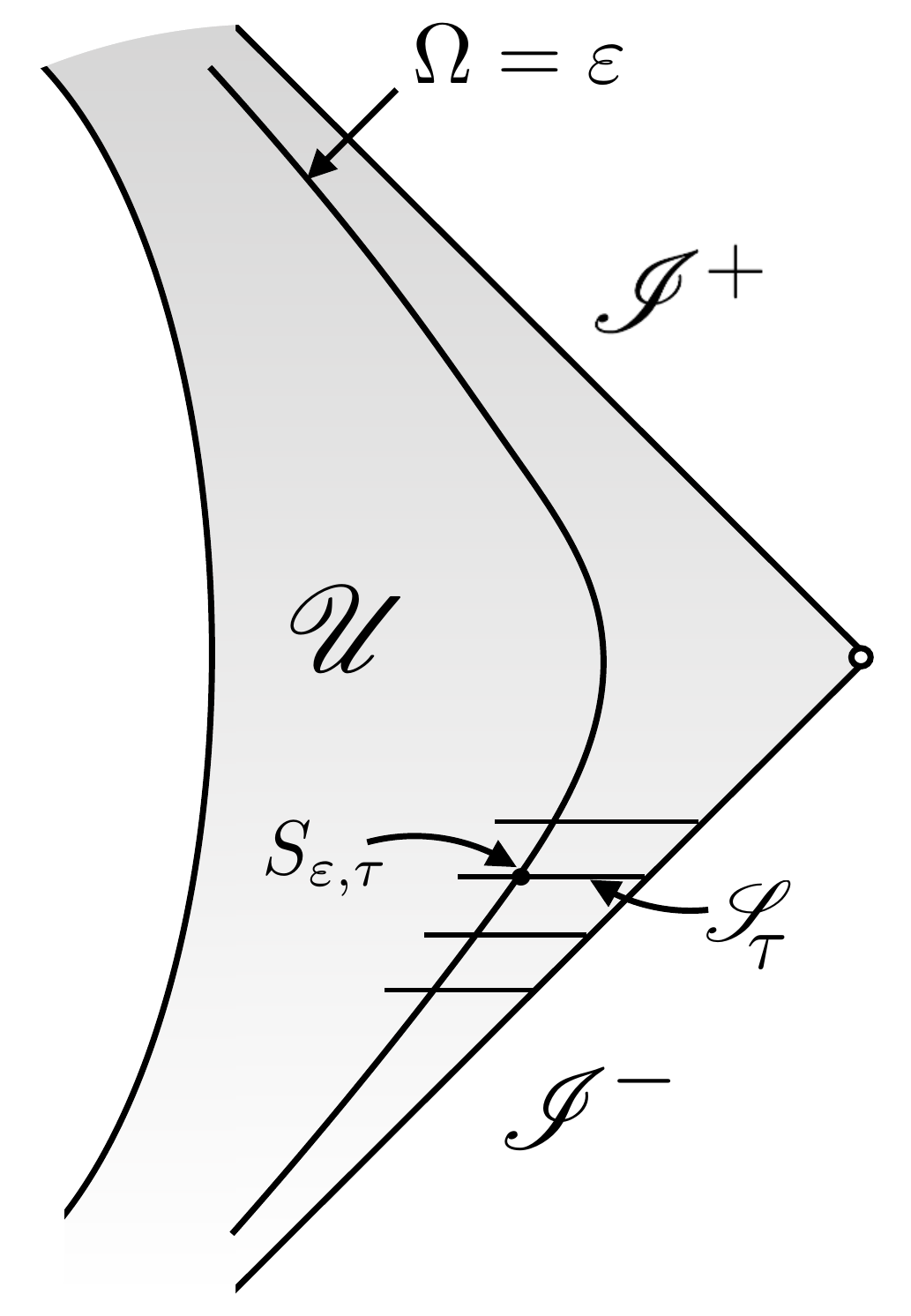}
}
\end{center}
\caption{ The sets $\mcU$, $\hyp_\tau$ and $\red{S_{\tau,\varepsilon}}$.
\label{F30III19.1}}
\end{figure}

\red{
\begin{Remark}
 \label{R12IV19.1}
The proof of  Theorem~\ref{T22III19.1} actually requires the weaker, but somewhat less  transparent condition that for each $\tau$ and for all $\epsilon$ small enough
\begin{equation}\label{30III19.7+2-}
 \mbox{
those generators of $\dot J^+( \red{S_{\tau,\varepsilon}})$ on which  $\frac{d\Omega}{ds}|_{\red{S_{\tau,\varepsilon}}} <0$  never leave $\mcU$.
}
\end{equation}
\Eqref{30III19.7} clearly implies \eqref{30III19.7+2-}.
\qed
\end{Remark}
}

\begin{Remark}
\label{R30III19.1}
For simplicity we assume that both metrics $\tilde\fourgKK$ and $\fourgKK$ are smooth up to boundary. One can check, using e.g.~\cite{ChGrant} and the conformal transformation formula for the divergence of null hypersurfaces,  that the proof applies if $\fourgKK$ is $C^2$ on $\mcM$ and $\tilde\fourgKK$ is $C^1$ up-to-boundary.
\qed
\end{Remark}

\begin{Remark}
 \label{R3IV19.1}
The theorem remains true, with an essentially identical proof, when the null energy condition is replaced by the following integrated null energy condition: for all half-geodesics $\gamma$ that start near $\scrim$ it holds that
$$
  \int_0^{\infty} \mathrm{Ric}(\gamma’,\gamma’) ds \ge 0
 \,;
$$
compare~\cite{GSWW}.
\qed
\end{Remark}

\noindent{\sc Proof of Theorem~\ref{T22III19.1}.}
The proof uses a mixture of well-known-by-now ideas from \cite{FriedmanSchleichWitt,galloway-topology,CGS}, with the heart of the argument stemming from~\cite{FriedmanSchleichWitt}. For the sake of clarity we present most details,
without claim to novelty.

Suppose that the domain of outer communications $\doc$ is non-empty and not simply-connected, otherwise there is nothing to prove.

Let $\bmcM_1$ be the universal cover of $\doc \cup \scri$
equipped  with the covering metrics  which we continue to denote by $\fourgKK$  and $\tilde \fourgKK$. The covering manifold $\bmcM_1$ contains several copies,
say $\mcU_\alpha$, with $\alpha \in \mathrm A$ for some index set $ \mathrm A$,
 of the simply connected neighborhood  $\mcU$ of the original $\scri$.  Each $\mcU_\alpha$ comes with its adjacent conformal boundaries $\scrip_\alpha$, $\scrim_\alpha$ and $\scri_\alpha:=\scrip_\alpha\cup\scrim_\alpha$.

Let us choose two distinct lifts of $\mcU$, call them $\mcU_1$ and $\mcU_2$, with conformal boundaries $\scri_1$ and $\scri_2$, and suppose that there exists a causal curve,  say $\tilde \gamma$, connecting $\scrim_1$ with $\scrip_2$.
Let $\tilde \tau$ be such that $\tilde \gamma$ meets $\scrim_1$ at $\hyp_{\tilde \tau }$. We can choose a small parameter $\epsilon>0$ and a value of $\tau_0$ near $\tilde \tau $ so that

\begin{enumerate}
  \item $\tilde \gamma$ intersects $\hyp_{\tau_0}\cap \{\Omega \le \epsilon\}\cap \mcM$, and
  \item $S:= \hyp_{\tau_0} \cap \{\Omega=\epsilon\}$  is compact, and
  \item   \eqref{30III19.7} holds, and
  \item $S$  is  inner future trapped, in the sense that it has negative divergence on those null geodesics which initially move normally away from $S$ to the future and in the direction of increasing $\Omega$.
        This follows from the following transformation formula%
\footnote{We take this opportunity to correct Equation (4.10) in \cite{CGS}, which should be replaced by \eqref{25V19.1}. This change has no incidence on the remaining arguments there.
}
 for the future- and past-divergences $\red{\theta^\pm}$, in space-time dimension $n+1$,
\begin{equation}\label{25V19.1}
\red{\theta^\pm} = (n-1) ( \tilde  T   \pm \tilde N)( \Omega) + \Omega\, \tilde \red{\theta^\pm}
 \,,
\end{equation}
where $\tilde \red{\theta^\pm} $ are the divergences calculated using the metric $\tilde g$, $\tilde T$ is the field of unit future-directed normals to $\hyp_{\tau_0}$ and $\tilde N$ is the field of outer-pointing normals to the level sets of $\Omega$ within $\hyp_{\tau_0}$.
By hypothesis $\nabla \Omega$ is null and past-directed on $\scrim$, tangent to $
\scrim$, while  $( \tilde  T   \pm \tilde N)$ is null, future-directed and transverse to $\scrim$. This shows that $\red{\theta^\pm}$ is positive, bounded away from zero on $\hyp_{\tau_0}\cap \scrim$, hence positive for all $\epsilon$ small enough.
\end{enumerate}

Let $S_1$ denote the lift of $S$ to $\mcU_1$.  We claim that $S_1$ is a future outer trapped surface as seen from $\scri_2^+$.
For this, note that the  {achronal} boundary $\dot J^+(S_1)$   forms near $S_1$ a union of two  {smooth} null hypersurfaces {ruled} by null geodesic  segments normal to $S_1$. Let us denote by
$\mcN_1^L$ the subset of $\dot J^+(S_1)$ on which $\Omega$ initially increases when moving away from $S_1$ along the generators, and $\mcN_1^R$ the subset on which $\Omega$ initially decreases.
As in~\cite[Theorem~6.1]{CGS} there exists a null geodesic $\gamma$ in $\dot J^+(S_1)$ connecting $S_1$ to $\scrip_2$. The geodesic $\gamma$ intersects $\partial \mcU_1$, and therefore cannot lie on $\mcN_1^R$   which is entirely included in $\mcU_1$ by \eqref{30III19.7}.

Hence $\gamma$ lies on $\mcN_1^L$,
which has negative divergence on $S_1$, as claimed.
But this contradicts Theorem~\ref{T12IV19.1}.
We conclude that
\begin{equation}\label{3IV19.61}
 I^{-}(\scrip_\alpha)\cap I^+(\scrim_\beta) = \emptyset
 \quad
 \mbox{for $\alpha \ne \beta$.}
\end{equation}

To finish the proof, note that the collection of open sets
$$
 I^{-}(\scrip_\alpha)\cap I^+(\scrim_\alpha)
 \quad
  \alpha \in \Omega
$$
covers $\bmcM_1$. The sets are pairwise disjoint by \eqref{3IV19.61}. Since $\bmcM_1$ is connected there is only one such set, contradicting the assumption of non-simple connectivity, and Theorem~\ref{T22III19.1} is established.
\qedskip
%

The least satisfactory condition in Theorem~\ref{T22III19.1} appears to be \eqref{30III19.7}, and the question arises whether it can be replaced by other natural conditions. We  list here some alternatives; we have not been able to verify that  \eqref{30III19.7} can be gotten rid of altogether.

\begin{Proposition}
 \label{P12IV19.1}
 Under the remaining hypotheses of Theorem~\ref{T22III19.1}, suppose that instead of \eqref{30III19.7} we have
 \begin{enumerate}
   \item    for each   $\tau $ and for \red{some} $\varepsilon$ small enough
   the connected component of
   the level set $\{\Omega=\varepsilon\}$ containing $\red{S_{\tau,\varepsilon}}$  (cf.\ Figure~\ref{F30III19.1})
    is a timelike hypersurface entirely contained in $\mcU$ and separating $\mcU$
 (compare Lemma~1 in~\cite{FriedmanSchleichWitt});~or
\newpage
   \item
    the set
\begin{equation}\label{15IV11.1+}
 \bmcD{}^- := \mcD^- \cup \scri^-
 \,,
  \
  \mbox{ where $\mcD^- = I^+(\scri^-)\cap \mcM$}
\end{equation}
is also globally hyperbolic, and
there exists a foliation $
\hyp_\tau$ as in Theorem~\ref{T22III19.1} such that for all $\tau$ and for \red{some} $\varepsilon$ small enough we have
\begin{equation}\label{30III19.7+1}
  \mbox{$\overline{\mcD^+
   \Big((\hyp_\tau\cap \{\Omega\le \varepsilon\})\cup \scrim_{\hyp_\tau} \Big)} \subset \mcU$,}
\end{equation}
where  $\scrim_{\hyp_\tau}:= \scrim\cap J^+(\hyp_\tau )$ (cf.\ Figure~\ref{F30III19.1+}).
\begin{figure}
\begin{center}
\mbox{
\includegraphics[width=1.8 in]{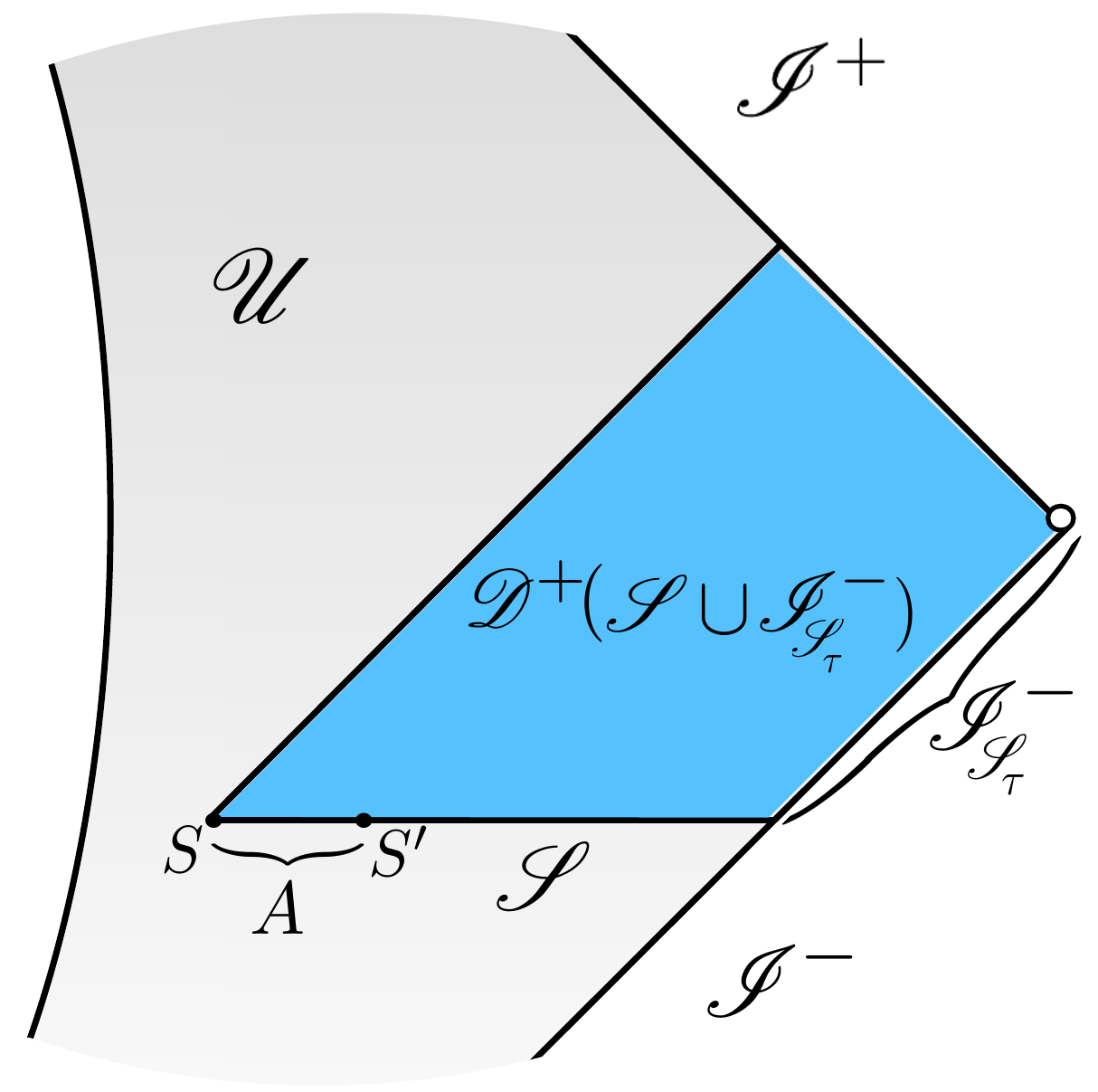}
}
\end{center}
\caption{The sets $\mcU$, $S \equiv \red{S_{\tau,\varepsilon}}$,  $\hyp \equiv \hyp_{\tau,\varepsilon}:=\hyp_{\tau} \cap \{\Omega \le \epsilon\}$, $\scrim_{\hyp_{\tau}}$ and $\mcD^+(\hyp\cup \scrim_{\hyp_\tau})$.\label{F30III19.1+}}
\end{figure}
 \end{enumerate}
Then the conclusions of  Theorem~\ref{T22III19.1}  hold.

\end{Proposition}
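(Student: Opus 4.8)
The plan is to re-run the proof of Theorem~\ref{T22III19.1} almost verbatim, exploiting the fact that condition~\eqref{30III19.7} enters that argument at exactly one point: to conclude that the null generator $\gamma\subset\dot J^+(S_1)$ reaching $\scrip_2$ must lie on the outward sheet $\mcN_1^L$ (which has negative divergence at $S_1$) and not on the inward sheet $\mcN_1^R$, on which $\Omega$ initially decreases. As recorded in Remark~\ref{R12IV19.1}, all that is really needed is that the generators of $\mcN_1^R$ never leave $\mcU_1$, for then a geodesic reaching the distinct copy $\scrip_2$, and hence crossing $\partial\mcU_1$, cannot lie on $\mcN_1^R$. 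I would therefore reduce the whole proposition to showing that each of the two alternative hypotheses forces this confinement; the remainder of the proof --- the appeal to Theorem~\ref{T12IV19.1}, the resulting emptiness~\eqref{3IV19.61}, and the covering-space connectedness contradiction --- then carries over unchanged.

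For the first alternative, I would let $\Sigma_\varepsilon$ be the connected component of $\{\Omega=\varepsilon\}$ containing $S$, a timelike hypersurface contained in and separating $\mcU$, and pass to its lift $\Sigma_\varepsilon^1$ in $\mcU_1$. This lift splits $\mcU_1$ into a region $\{\Omega<\varepsilon\}$ adjacent to $\scri_1$ and an interior region $\{\Omega>\varepsilon\}$, with $S_1\subset\Sigma_\varepsilon^1$. The generators of $\mcN_1^R$ leave $S_1$ into the near-$\scri_1$ side. Following the barrier argument of Lemma~1 in~\cite{FriedmanSchleichWitt}, I would argue that the timelike separating hypersurface confines these generators to the sheet of $\mcU_1$ adjacent to $\scri_1$, so that they can only approach $\scrip_1$ and never $\scrip_2$; this yields $\mcN_1^R\subset\mcU_1$.

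For the second alternative, I would instead establish the inclusion
$$
 \mcN_1^R \subset \overline{\mcD^+\Big((\hyp_{\tau_0}\cap\{\Omega\le\varepsilon\})\cup\scrim_{\hyp_{\tau_0}}\Big)}\,,
$$
which, after lifting to $\mcU_1$, lies in $\mcU_1$ by~\eqref{30III19.7+1}. The idea is that a future-directed generator of the inward sheet leaves $S_1$ with $\Omega$ decreasing and so heads toward $\scrim_1$; using the global hyperbolicity of $\bmcD^-$ granted by~\eqref{15IV11.1+}, any past-inextendible causal curve through a point $p$ of such a generator must meet the acausal initial set $(\hyp_{\tau_0}\cap\{\Omega\le\varepsilon\})\cup\scrim_{\hyp_{\tau_0}}$, placing $p$ in the asserted future domain of dependence. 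The confinement $\mcN_1^R\subset\mcU_1$ would then follow and the argument would close exactly as in Theorem~\ref{T22III19.1}.

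The hardest part, I expect, is the second alternative: one must verify that the inward generators genuinely remain in the domain of dependence of the partial initial surface $(\hyp_{\tau_0}\cap\{\Omega\le\varepsilon\})\cup\scrim_{\hyp_{\tau_0}}$. This means checking that this set is acausal up to the corner where $\hyp_{\tau_0}$ meets $\scrim$, controlling the generators as they approach $\scrim_1$, and ruling out escape through $\partial\mcU$ --- all resting on the global hyperbolicity of $\bmcD^-$. For the first alternative the corresponding delicate point is making the Friedman--Schleich--Witt barrier argument rigorous in the covering space $\bmcM_1$, in particular controlling possible re-crossings of $\Sigma_\varepsilon^1$, so that the inward sheet provably cannot circumvent the separating hypersurface to reach $\scrip_2$.
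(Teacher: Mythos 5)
You have correctly identified the reduction that drives the paper's proof: condition \eqref{30III19.7} enters the argument of Theorem~\ref{T22III19.1} only to force the geodesic reaching $\scrip_2$ onto the trapped sheet $\mcN_1^L$, so by Remark~\ref{R12IV19.1} it suffices to confine the inward generators to $\mcU_1$. For alternative~1, however, your argument stops exactly where the work begins: you defer to ``the barrier argument of Lemma~1 in~\cite{FriedmanSchleichWitt}'' and yourself flag the control of re-crossings of $\Sigma_\varepsilon^1$ as the unresolved delicate point. The paper closes precisely this gap with a specific mechanism you do not supply: the component $\{\Omega=\varepsilon\}_0$ is a timelike hypersurface containing the compact achronal spacelike submanifold $S_{\tau,\varepsilon}$, hence by \cite{BILY,Galloway:cauchy} it is a globally hyperbolic spacetime in its own right with Cauchy surface $S_{\tau,\varepsilon}$. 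A generator of $\dot J^+(S_{\tau,\varepsilon})$ starting with $\dot\Omega<0$ and escaping $\mcU$ must re-cross $\{\Omega=\varepsilon\}_0$, and the re-crossing point is then joined to $S_{\tau,\varepsilon}$ by a timelike curve within $\{\Omega=\varepsilon\}_0$; since that point lies in $J^+(S_{\tau,\varepsilon})$ and $S_{\tau,\varepsilon}$ is achronal, it lies in $I^+(S_{\tau,\varepsilon})$, which is impossible for a point of the achronal boundary. Without this (or an equivalent) argument, your alternative~1 is a plan rather than a proof.

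For alternative~2 the gap is more serious: the proposed inclusion $\mcN_1^R \subset \overline{\mcD^+\big((\hyp_{\tau_0}\cap\{\Omega\le\varepsilon\})\cup\scrim_{\hyp_{\tau_0}}\big)}$ is asserted, not proved. Your justification --- that ``any past-inextendible causal curve through a point $p$ of such a generator must meet the acausal initial set'' --- is merely a restatement of $p\in \mcD^+(\widehat{\hyp}_1)$ and begs the question: global hyperbolicity of $\bmcD{}^-$ does not by itself prevent such a curve from slipping around the edge $S_1$ through $\{\Omega>\varepsilon\}$ or from being past-inextendible without meeting $\scrim_{\hyp_{\tau_0}}$. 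The paper in fact never proves this inclusion; it takes a genuinely different route. It thickens $S$ into a compact collar $A\cong[0,1]\times S$ with $\partial A = S\cup S'$, works with $\dot J^+(A_1)$ instead of $\dot J^+(S_1)$, and shows that the null geodesic $\gamma$ from $\scrip_2$ cannot have past endpoint on $S_1'$: such a $\gamma$ would cross $H^+(\widehat{\hyp}_1)$ at some $p$, and the horizon generator $\hat\gamma$ through $p$ is trapped, to the past of $p$, in the compact set $J^-(p,\bmcD{}^-_1)\cap J^+(\hyp_1,\bmcD{}^-_1)$ --- compactness being exactly where global hyperbolicity of $\bmcD{}^-$ is used, together with the concatenation trick of sliding down generators of $\scrim_1$ --- so strong causality forces $\hat\gamma$ to have past endpoint on the edge $S_1$, whence $\hat\gamma$, and therefore $\gamma$, enters $I^+(A_1)$, contradicting $\gamma\subset\dot J^+(A_1)$. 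Note that the hypothesis \eqref{30III19.7+1} is applied to horizon generators, which are already known to be causal and controlled, not to arbitrary past-inextendible curves as in your sketch. To make your direct route work you would need to rule out inward generators of $\dot J^+(S_1)$ exiting $\overline{\mcD^+(\widehat{\hyp}_1)}$ across $H^+(\widehat{\hyp}_1)$, which requires essentially the same horizon-generator analysis; the paper's collar construction is precisely the device that circumvents this difficulty.
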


\proof
1.
We wish to show that \eqref{30III19.7+2-} holds. For this, let us denote by
$$
 \{\Omega=\varepsilon\}_0
$$
that  component of $\{\Omega=\varepsilon\}$    which contains $\red{S_{\tau,\varepsilon}}$. It is, by hypothesis, a  timelike hypersurface.
It contains a compact achronal spacelike surface $\red{S_{\tau,\varepsilon}}$ and therefore constitutes a globally hyperbolic  space-time of its own, with Cauchy surface $\red{S_{\tau,\varepsilon}}$, by \cite{BILY,Galloway:cauchy}. Since $\{\Omega=\varepsilon\}_0$ separates $\mcU$ by hypothesis,  every generator of $\dot J^+(\red{S_{\tau,\varepsilon}})$ which starts with $\dot \Omega <0$ at $\red{S_{\tau,\varepsilon}}$ and which leaves $\mcU$ has to cross $\{\Omega=\varepsilon\}_0$ again. But global hyperbolicity of $\{\Omega=\varepsilon\}_0$ implies that through every point of  $\{\Omega=\varepsilon\}_0$ there exists a timelike curve meeting $\red{S_{\tau,\varepsilon}}$, which is not possible since the generators of $\dot J^+(\red{S_{\tau,\varepsilon}})$
do not intersect $I^+(\red{S_{\tau,\varepsilon}})$.
\medskip

2.
Let $\tau_0$, as well as the remaining
notation, be as in the proof of Theorem \ref{T22III19.1}.

Let
$\hyp_{\tau_0,\epsilon } = \hyp_{\tau_0} \cap \{\Omega \le \epsilon \}$,
and set  $\widehat{\hyp}  = \hyp_{\tau_0,\epsilon } \cup \scri^-_{\hyp_{\tau_0}}$.
From (3.5) we have,
\beq
\overline{\mcD^+(\widehat{\hyp})} \subset \mcU.
\eeq
Note that the manifold $S\equiv \hyp_{\tau_0} \cap \{\Omega=\epsilon\}$  of the proof of Theorem~\ref{T22III19.1} satisfies $S = \mbox{\rm edge}(\widehat{\hyp})$.  Let $A \subset \hyp_{\tau_0,\epsilon } \cap \mcM$ be a smooth compact manifold diffeomorphic to $[0,1] \times S$ with boundary $\partial A = S \cup S'$, see Figure~\ref{F30III19.1+}.

Suppose $\langle\langle\scri\rangle\rangle$ is not simply connected.  Let $\mcU_1$ and $\mcU_2$ be as in the proof of Theorem~\ref{T22III19.1}.  Let $S_1$, $S_1'$, \blue{$\hyp_1$},
 $\widehat{\hyp}_1$
 and $A_1$ be the lifts of $S$, $S'$, \red{$\hyp_{\tau_0,\epsilon } $}, $\widehat{\hyp}$ and $A$, respectively, to
 $\mcU_1$.  We may assume $J^+(A_1)$ meets $\scri_2^+$
but, by Lemma 2.4, does not contain all of it. Then there exists a future complete null geodesic $\g$ on $\dot{J}(A_1)$ that extends from $\scri_2^+$ to  either $S_1$ or $S_1'$ when followed to the past.

Suppose, first, that $\gamma$ extends to $S_1'$. Then it will have to cross
$D^+(\widehat{\hyp}_1) \subset \mcU_1$  at some point, say
$p \in H^+(\widehat{\hyp}_1) \cap \mcM$ (not on $S_1$).  Let us denote by $\hat \gamma$ a generator of
$H^+(\widehat{\hyp}_1)$ passing through $p$.

Now, generators of $ H^+(\widehat{\hyp}_1)$ are either past inextendible, or have an end point on the edge of  $\widehat{\hyp}_1$. We wish to show that $\hat \gamma$  extends to $S_1$, and therefore
the former case cannot occur. In order to see this,
we first note  that any point $q$ lying on a generator of $ H^+(\widehat{\hyp}_1)$ is in $ J^+(\hyp_1
,\bmcD{}^- _1)$, where
$$
\bmcD{}^- _1 := \mcD^-_1 \cup \blue{\scri^-_1}
 \,,
  \
  \mbox{ with $\mcD^-_1 := I^+(\blue{\scri^-_1})\cap \bmcM$}
  \,.
$$
Indeed, since $q \in H^+(\widehat{\hyp}_1)$ there exists a causal curve connecting $q$ with $\widehat \hyp_1= \hyp_1
\cup \scrim_{\hyp
}$. If the curve connects $q$ with $\hyp_1
$ there is nothing to prove; otherwise it connects to $\scrim_{\hyp_1
}$, but then we can concatenate it with a segment of generator of $\scrim_1$ which slides down from the intersection point to $\scrim_{\hyp_1
}$.
We conclude that the part of the generator of  $ H^+(\widehat{\hyp}_1)$ which lies to the past of $p$ is included in
%
$$
 \mcK:=  J^-(p,\bmcD{}^-_1)\cap
  J^+ ( \hyp_1
  ,\bmcD{}^-_1)
  \,.
$$
The set $\mcK$ is compact  by global hyperbolicity of   $\bmcD{}^-_1$.
Since $\hat \gamma$ is contained in the compact set $\mcK$, it has to end on the edge of $\widehat{\hyp}_1$ to avoid a violation of strong causality, as claimed.

Now, since the edge of $\widehat \hyp_1$ is $S_1$,  $\hat \gamma$ clearly enters the timelike future of $A_1$.
But this implies that
$\g$ enters the timelike future of $A_1$, which is not possible for a curve lying on $\dot J^+(A_1)$, a contradiction.  Thus $\g$ extends to $S_1$, and necessarily to the inside.  Hence  $\g$ lies on $\mcN_1^L$, and the proof continues as in the proof of Theorem \ref{T22III19.1}.

\qed

\bigskip

{\noindent \sc Acknowledgements} The research of PTC was supported in part by a grant FWF P 29517-N27, and by a grant of the Polish National Center of Science (NCN) 2016/21/B/ST1/00940. The research of GG was supported in part by an NSF grant DMS-1710808.

{\small
\bibliographystyle{amsplain}

\bibliography{ChruscielGallowayStandardTopologicalCensorship-minimal}
}
\end{document}